\DeclareMathAlphabet{\pazocal}{OMS}{zplm}{m}{n}
\tikzstyle{doubled}=[line width=1.5pt] 
\tikzstyle{dot}=[inner sep=0mm,minimum width=2mm,minimum height=2mm,draw,shape=circle]  
\tikzstyle{ddot}=[inner sep=0mm, doubled, minimum width=2.5mm,minimum height=2.5mm,draw,shape=circle]
\tikzstyle{pdot}=[inner sep=0mm, doubled, minimum width=2.5mm,minimum height=2.5mm,shape=circle]
\tikzstyle{phase dimensions}=[minimum size=6mm,font=\footnotesize,inner sep=0.2mm,outer sep=-2mm]
\tikzstyle{phase dot}=[pdot,phase dimensions]
\tikzstyle{wphase dot}=[dot, phase dimensions]
\tikzstyle{hadamard}=[fill=white,draw,inner sep=0.6mm,font=\footnotesize,minimum height=4mm,minimum width=4mm]
\tikzstyle{anti} = [shade, bottom color=black, top color = white, draw, minimum height = 4mm, minimum width = 4mm]
\tikzstyle{triang}=[regular polygon,regular polygon sides=3,draw,scale=0.75,inner sep=-0.5pt,minimum width=9mm,fill=white,regular polygon rotate=180]
\tikzstyle{triangdag}=[regular polygon,regular polygon sides=3,draw,scale=0.75,inner sep=-0.5pt,minimum width=9mm,fill=white]
\newcommand{\boxshape}[3]{%
\pgfdeclareshape{#1}{
\inheritsavedanchors[from=rectangle] 
\inheritanchorborder[from=rectangle]
\inheritanchor[from=rectangle]{center}
\inheritanchor[from=rectangle]{north}
\inheritanchor[from=rectangle]{south}
\inheritanchor[from=rectangle]{west}
\inheritanchor[from=rectangle]{east}
\backgroundpath{
\southwest \pgf@xa=\pgf@x \pgf@ya=\pgf@y
\northeast \pgf@xb=\pgf@x \pgf@yb=\pgf@y

\@tempdima=#2
\@tempdimb=#3

\pgfpathmoveto{\pgfpoint{\pgf@xa - 5pt + \@tempdima}{\pgf@ya}}
\pgfpathlineto{\pgfpoint{\pgf@xa - 5pt - \@tempdima}{\pgf@yb}}
\pgfpathlineto{\pgfpoint{\pgf@xb + 5pt + \@tempdimb}{\pgf@yb}}
\pgfpathlineto{\pgfpoint{\pgf@xb + 5pt - \@tempdimb}{\pgf@ya}}
\pgfpathlineto{\pgfpoint{\pgf@xa - 5pt + \@tempdima}{\pgf@ya}}
\pgfpathclose
}
}}
\tikzstyle{map}=[draw,shape=SEbox,inner sep=2pt,minimum height=6mm,fill=white]
\tikzstyle{mapdag}=[draw,shape=NEbox,inner sep=2pt,minimum height=6mm,fill=white]
\tikzstyle{maptrans}=[draw,shape=NWbox,inner sep=2pt,minimum height=6mm,fill=white]
\tikzstyle{mapconj}=[draw,shape=SWbox,inner sep=2pt,minimum height=6mm,fill=white]
\tikzstyle{dmap}=[draw,doubled,shape=SEbox,inner sep=2pt,minimum height=6mm,fill=white]
\tikzstyle{dmapdag}=[draw,doubled,shape=NEbox,inner sep=2pt,minimum height=6mm,fill=white]
\tikzstyle{dmaptrans}=[draw,doubled,shape=NWbox,inner sep=2pt,minimum height=6mm,fill=white]
\tikzstyle{dmapconj}=[draw,doubled,shape=SWbox,inner sep=2pt,minimum height=6mm,fill=white]
\pgfmathsetmacro{\pgf@shorten@left}{\pgfkeysvalueof{/tikz/shorten left}}
\pgfmathsetmacro{\pgf@shorten@right}{\pgfkeysvalueof{/tikz/shorten right}}
\pgfmathsetmacro{\pgf@shorten@left}{\pgfkeysvalueof{/tikz/shorten left}}
\pgfmathsetmacro{\pgf@shorten@right}{\pgfkeysvalueof{/tikz/shorten right}}
\tikzstyle{kpoint common}=[draw,fill=white,inner sep=1pt,minimum height=4mm]
\tikzstyle{kpoint}=[shape=cornerpoint,shorten left=5pt,kpoint common]
\tikzstyle{kpoint adjoint}=[shape=cornercopoint,shorten left=5pt,kpoint common]
\tikzstyle{kpoint conjugate}=[shape=cornerpoint,shorten right=5pt,kpoint common]
\tikzstyle{kpoint transpose}=[shape=cornercopoint,shorten right=5pt,kpoint common]
\tikzstyle{kpointdag}=[kpoint adjoint]
\tikzstyle{kpointadj}=[kpoint adjoint]
\tikzstyle{kpointconj}=[kpoint conjugate]
\tikzstyle{kpointtrans}=[kpoint transpose]
\tikzstyle{big kpoint}=[kpoint, minimum width=1.2 cm, minimum height=8mm, inner sep=4pt, text depth=3mm]
 \tikzstyle{upground}=[circuit ee IEC,thick,ground,rotate=90,scale=1.5]
 \tikzstyle{downground}=[circuit ee IEC,thick,ground,rotate=-90,scale=1.5]
\tikzstyle{discarding}=[fill=white, draw=black, shape=circle, style=upground]
\tikzstyle{smalldiscarding}=[fill=white, draw=black, shape=circle, style=upground, scale=0.5]
\tikzstyle{backdiscard}=[fill=white, draw=black, shape=circle, style=downground]
\tikzstyle{smallbackdiscard}=[fill=white, draw=black, shape=circle, style=downground, scale=0.5]
\tikzstyle{state}=[fill=white, draw=black, style=triang, tikzit shape=rectangle]
\tikzstyle{kstate}=[fill=white, draw=black, style=kpoint, tikzit shape=rectangle]
\tikzstyle{kstateconj}=[fill=white, draw=black, style=kpoint conjugate, tikzit shape=rectangle]
\tikzstyle{kstateBIG}=[fill=white, draw=black, style=big kpoint, tikzit shape=rectangle]
\tikzstyle{effect}=[fill=white, draw=black, style=triangdag]
\tikzstyle{keffect}=[fill=white, draw=black, style=kpoint adjoint]
\tikzstyle{keffectconj}=[fill=white, draw=black, style=kpoint transpose]
\tikzstyle{morphdag}=[style=mapdag]
\tikzstyle{morph}=[style=map]
\tikzstyle{morphtrans}=[style=maptrans]
\tikzstyle{morphconj}=[style=mapconj]
\tikzstyle{CPMmorph}=[style=dmap]
\tikzstyle{CPMmorphconj}=[style=dmapconj]
\tikzstyle{CPMkstate}=[fill=white, draw=black, style=kpoint, tikzit shape=rectangle, doubled]
\tikzstyle{CPMkstateconj}=[fill=white, draw=black, style=kpoint conjugate, tikzit shape=rectangle, doubled]
\tikzstyle{CPMkstateBIG}=[fill=white, draw=black, style=big kpoint, tikzit shape=rectangle, doubled]
\tikzstyle{CPMkeffect}=[fill=white, draw=black, style=kpoint adjoint, doubled]
\tikzstyle{CPMkeffectconj}=[fill=white, draw=black, style=kpoint transpose, doubled]
\tikzstyle{leak}=[style=tinypoint, regular polygon rotate=-90]
\tikzstyle{leakfill}=[style=tinypoint, regular polygon rotate=-90, fill=black]
\tikzstyle{Z}=[style=dot, fill=green]
\tikzstyle{X}=[style=dot, fill=red]
\tikzstyle{black_dot}=[style=dot, fill=black]
\tikzstyle{white_dot}=[style=dot, fill=white]
\tikzstyle{qblack_dot}=[style=ddot, fill=black]
\tikzstyle{qwhite_dot}=[style=ddot, fill=white]
\tikzstyle{whitephase}=[style=wphase dot, fill=white]
\tikzstyle{qredphase}=[style=phase dot, fill=red]
\tikzstyle{qgreenphase}=[style=phase dot, fill=green]
\tikzstyle{had}=[style=hadamard, doubled]
\tikzstyle{classhad}=[style=hadamard]
\tikzstyle{dottededge}=[-, dotted]
\tikzstyle{double edge}=[-, style=doubled, draw=black, tikzit draw={rgb,255: red,191; green,0; blue,64}]
\tikzstyle{Red}=[-, draw={rgb,255: red,239; green,15; blue,45}]
\tikzstyle{Pink}=[-, draw={rgb,255: red,248; green,7; blue,224}]
\tikzstyle{Blue}=[-, draw={rgb,255: red,45; green,4; blue,249}]
\tikzstyle{Turq}=[-, draw=black]
\tikzstyle{Green}=[-, draw={rgb,255: red,4; green,241; blue,20}]
\tikzstyle{Yellow}=[-, draw={rgb,255: red,244; green,244; blue,5}]
\tikzstyle{Beige}=[-, draw={rgb,255: red,200; green,221; blue,143}]
\tikzstyle{Grey}=[-, draw={rgb,255: red,144; green,144; blue,144}]
\tikzstyle{new edge style 0}=[->]
\theoremstyle{definition}
\theoremstyle{plain}
\newtheorem{thm}{Theorem}
\theoremstyle{plain}
\theoremstyle{plain}
\newtheorem{coro}[thm]{Corollary}
\title{A Diagrammatic Approach to Information Transmission in Generalised Switches}
\author{Matt Wilson \institute{Department of Computer Science, University of Oxford, Wolfson Building, Parks Road, Oxford, UK} \institute{HKU-Oxford Joint Laboratory for Quantum Information and Computation} \email{matthew.wilson@cs.ox.ac.uk} \and Giulio Chiribella
\institute{QICI Quantum Information and Computation Initiative, Department of Computer Science}
\institute{Department of Computer Science, University of Oxford, Wolfson Building, Parks Road, Oxford, UK} 
\institute{HKU-Oxford Joint Laboratory for Quantum Information and Computation}
\institute{Perimeter Institute for Theoretical Physics, 31 Caroline Street North, Waterloo, Ontario, Canada}
\email{giulio.chiribella@cs.ox.ac.uk}}
\begin{document}

\maketitle

\begin{abstract}
The quantum switch is a higher-order operation that takes as an input two quantum processes and combines them in a coherent superposition of two alternative orders.   Here we provide an approach to the quantum switch based on the methods of categorical quantum mechanics. Specifically,  we represent the quantum switch as a sum of diagrams in the category of finite dimensional Hilbert spaces, or, equivalently, as a sum of diagrams built from Selinger's CPM construction.  The sum-of-diagrams picture provides intuition for the activation of classical capacity of completely depolarising channels (CDPCs) and allows for generalisation to $N$-channel switches. We demonstrate the use of these partially diagrammatic methods by deriving a permutation condition for computing the output of any $N$-channel switch of CDPCs, we then use that condition to prove that amongst all possible terms, the interference terms associated to cyclic permutations of the $N$ channels are the information-transmitting terms with maximum normalisation.\end{abstract}

\section{Introduction}
Quantum Shannon theory \cite{Schumacher1995QuantumCoding,Wilde2013QuantumTheory} explores the extension of Shannon's information theory to scenarios where the information carriers are quantum systems.   Recently, there has been an interest in a further extension, where  not only the information carriers, but  also the configuration of the communication channels, can be quantum \cite{Ebler2018EnhancedOrder,Chiribella2018IndefiniteChannel,Salek2018QuantumOrders,Abbott2018CommunicationChannels,Guerin2019CommunicationNoise,Kristjansson2019ResourceProcesses}. These extensions allow the communication channels to be combined in more general ways than those allowed in standard quantum Shannon theory. Technically, the combination of channels is described by a quantum supermap  \cite{Chiribella2008TransformingSupermaps}, a higher-order transformation that maps channels into channels.  A paradigmatic example of supermap is the quantum switch \cite{chiribella2009beyond, Chiribella2013QuantumStructure} of two channels $\pazocal{N}^{(1)}$, $\pazocal{N}^{(2)}$ which superposes the two possible sequential compositions $\pazocal{N}^{(1)} \circ \pazocal{N}^{(2)}$ and $\pazocal{N}^{(2)} \circ \pazocal{N}^{(1)}$.
The quantum switch has been shown to offer  computational advantages \cite{Chiribella2012PerfectStructures,Oreshkov2012QuantumOrder,Araujo2014ComputationalGates,Guerin2016ExponentialCommunication}, as well as advantages in quantum metrology  \cite{zhao2019quantum,Mukhopadhyay2018SuperpositionThermometry}. 

In this paper we analyze  the quantum switch from the perspective of Categorical Quantum Mechanics (CQM) \cite{Abramsky2004AProtocols, Coecke2010QuantumPicturalism} which has been used to analyse a variety of quantum protocols \cite{Boixo2012EntangledDephasing,Chancellor2016GraphicalCorrection, Gogioso2017FullyProblem,Kissinger2017Picture-perfectDistribution,Ranchin2014CompleteMechanics,Vicary2012TheAlgorithms,Zeng2014AbstractAlgorithms}, and to formalise causality \cite{Coecke2013CausalProcesses} and causal structure \cite{Kissinger2019AStructure}. Here we use the diagrammatic language of CQM to analyse information processing advantages of the quantum switch by writing each output as a sum of diagrams built from the CPM-construction \cite{Selinger2007DaggerAbstract, Coecke2008AxiomaticCPM-construction}. 
We focus specifically on the  activation of capacities of completely depolarising channels (CDPCs) shown  in \cite{Ebler2018EnhancedOrder}: there, it was shown that two CDPCs acting in a superposition of two alternative orders enable the transmission of classical information, even though each individual channel blocks information entirely.    In the following, we will consider the generalisation to $N\ge 2$  CDPCs,  with quantum superpositions   of general permutations of $N$ completely depolarising channels.   

Previous work on this subject was done by Procopio {\em et al} in Ref. \cite{Procopio2019CommunicationScenario}, where a formula is given for the action of the quantum switch of $N$ partially depolarising channels $\{\pazocal{N}^{(i)}\}_{i=1}^{N}$. Here we provide  an explicit expression for the output of such an $N$-channel switch using only diagrammatic manipulations based on the algebra of permutations. This permutation condition is then used as a heuristic to suggest that maximum capacity enhancements may be associated to the switch of the $N$ cyclic permutations.   The communication enhancements resulting from the superposition of cyclic permutations are independently discussed in \cite{chiribella2020quantumcomm} and \cite{sazim2020classical}.



\section{Preliminaries}\label{sec:switch}
We first review the algebraic representation of  quantum channels and  of  the quantum switch within the Hilbert space framework of quantum mechanics.  Then, we review the diagrammatic representation of quantum channels in the  framework of categorical quantum mechanics.
\subsection{Algebraic Presentation of a Quantum Channel}
In the  pure state picture of quantum mechanics, a quantum state is represented by a normalised element $\ket{\psi}$ of a Hilbert space $\cal H$, up to a global phase. Pure quantum states are then  generalised to mixed states, described by unit-trace positive linear operators $\rho \in L(\cal H)$, $L(\cal H)$ denoting the set of linear operators on the Hilbert space  $\cal H$. 

A quantum channel $\pazocal{N}: L({\cal H}) \rightarrow L(\cal H)$ is any transformation $\rho \mapsto \pazocal{N}(\rho)$ which is linear, trace preserving, and completely positive. For any quantum channel there exists operators $\{K_{i}\}$ such that 
  $  \pazocal{N}(\rho) = \sum_{i} K_{i}\rho K^{\dagger}_{i}$ $\forall \rho  \in L(\cal H)$, 
which is referred to as a Kraus decomposition of a channel $\pazocal{N}$ into Kraus operators $\{K_{i}\}$. Two canonical examples of quantum channels  are the identity channel $\pazocal{I}$ and the completely depolarising channel $\pazocal{D}$, defined as 
$\pazocal{I}(\rho) = \rho$ and $\pazocal{D}(\rho) = \frac{I}{d}$, respectively.
\subsection{Algebraic Presentation of the Quantum Switch}

Channels are transformations of states. One can also consider transformations of channels,  an idea formally captured by the framework of quantum supermaps \cite{Chiribella2008TransformingSupermaps,chiribella2009theoretical,Chiribella2013QuantumStructure}. The quantum switch \cite{chiribella2009beyond,Chiribella2013QuantumStructure} is a bipartite supermap, from a pair of channels $\pazocal{N}^{(1)}$, $\pazocal{N}^{(2)}$ and a fixed control qubit $\ket{+}\bra{+}$ it produces a superposition of sequential compositions $\pazocal{N}^{(1)}\circ \pazocal{N}^{(2)}$ and $\pazocal{N}^{(2)}\circ \pazocal{N}^{(1)}$. 

The quantum switch $S$ of $\pazocal{N}^{(1)}$ and $\pazocal{N}^{(2)}$, with Kraus decompositions $\{ K^{(1)}_{i_{1}} \}_{i_{1} = 1}^{d^2}$ and $\{ K^{(2)}_{i_{2}} \}_{i_{2} =1}^{d^2}$ respectively can be split into four components
\begin{align}
    S(\pazocal{N}^{(1)}, \pazocal{N}^{(2)})(\rho,\ket{+}\bra{+}) = \frac{1}{2}   \sum_{i_{1}}^{d^2}\sum_{i_{2}}^{d^2} K_{i_{1}}^{(1)}  K_{i_{2}}^{{(2)}}\rho K_{i_{2}}^{{(2)}^{\dagger}} K_{i_{1}}^{{(1)}^{\dagger}} \otimes \ket{0}\bra{0} \nonumber \\
    + \frac{1}{2}   \sum_{i_{1}}^{d^2}\sum_{i_{2}}^{d^2} K_{i_{2}}^{(2)}  K_{i_{1}}^{{(1)}}\rho K_{i_{1}}^{{(1)}^{\dagger}} K_{i_{2}}^{{(2)}^{\dagger}} \otimes \ket{1}\bra{1} \nonumber \\
    +\frac{1}{2}   \sum_{i_{1}}^{d^2}\sum_{i_{2}}^{d^2} K_{i_{1}}^{(1)}  K_{i_{2}}^{{(2)}}\rho K_{i_{1}}^{{(1)}^{\dagger}} K_{i_{2}}^{{(2)}^{\dagger}} \otimes \ket{0}\bra{1} \nonumber \\
    +\frac{1}{2}   \sum_{i_{1}}^{d^2}\sum_{i_{2}}^{d^2} K_{i_{2}}^{(2)}  K_{i_{1}}^{{(1)}}\rho K_{i_{2}}^{{(2)}^{\dagger}} K_{i_{1}}^{{(1)}^{\dagger}} \otimes \ket{1}\bra{0}
\end{align}
Interference between the two sequential orderings of $\pazocal{N}^{(1)}$ and $\pazocal{N}^{(2)}$ is seen in the off diagonal elements of the control qubit. When $\pazocal{N}^{(1)}$ and $\pazocal{N}^{(2)}$ are both completely depolarising channels (CDPCs), $\pazocal{N}^{(1)} = \pazocal{N}^{(2)} = \pazocal{D}$, 
the algebraic properties of the Kraus decomposition of $\pazocal{D}$ can be used to compute the output explicitly \cite{Ebler2018EnhancedOrder}.
\begin{equation}\label{eq2}
    S(\pazocal{N}^{(1)}, \pazocal{N}^{(2)})(\rho,\ket{+}\bra{+}) = \frac{1}{2} \sum_{i,j \in \{0,1\}} \left[ \delta_{ij} \frac{I}{d} + (1- \delta_{ij})\frac{\rho}{d^2} \right] \otimes \ket{i}\bra{j}
\end{equation}
The dependence of the output on $\rho$, implies this channel can transmit information, formally it has non-zero classical capacity \cite{Ebler2018EnhancedOrder}. The output after discarding of the control qubit is maximally mixed, information can only reach the receiver if the receiver additionally have access to the control system.

\subsection{Diagrammatic Representation of a Quantum Channel}
In the language of categorical quantum mechanics, Hilbert spaces are drawn as wires \cite{Coecke2010QuantumPicturalism}, 
\begin{equation}
    \tikzfig{figs/hilbspace}
\end{equation}
A density matrix $\rho$ is a box with output wires (wires pointing upwards), a quantum channel $\pazocal{N}$ has input and output wires.
\begin{equation}
    \tikzfig{figs/cptheone}
\end{equation}
Any plain wire can be considered an identity map and expanded as a resolution of the identity, similarly for a bent wire (or ``cap") representing the trace.
\begin{equation}
    \tikzfig{figs/capid2}
\end{equation}
A closed loop is then the dimension of the Hilbert space $d = \textrm{dim}(\cal H)$,
\begin{equation}
    \tikzfig{figs/loop}
\end{equation}
and finally the Kraus decomposition of a map can be expressed using the bent wire.
\begin{equation}
    \tikzfig{figs/cpm2}
\end{equation}
Formally the bent wire representation of a quantum channel is known as the CPM-construction \cite{Selinger2007DaggerAbstract}. The representation of the trace as a wire gives an intuition for the wherabouts of the information flow in indefinite causal order scenarios.

\section{Translating the Output of the Quantum Switch into a Sum of Diagrams}
We notate each of the two quantum channels in the input of the quantum switch as $\{\pazocal{N}^{(i)}\}_{i=\{1,2\}}$ by

\begin{equation}
    \tikzfig{figs/cpm4}
\end{equation}

By replacing sums over Kraus operators with caps, the output of the quantum switch of $\pazocal{N}^{(1)}$ and $\pazocal{N}^{(2)}$ can then be written as a sum of diagrams, each of which we refer to as ``CPM-like":
\begin{equation}\label{qmsbasis}
  \tikzfig{figs/QMSBASIS2}
\end{equation}
This picture can be used to reproduce Eq.(\ref{eq2})  and the consequent classical capacity activation. Indeed, taking each of $\pazocal{N}^{(1)}$ and $\pazocal{N}^{(2)}$ to be CDPCs, a CDPC is written as 
\begin{equation}
    \tikzfig{figs/DP}
\end{equation}
The diagram for a CDPC separates vertically, which implies that it has no classical or quantum capacity. Upon insertion into equation \ref{qmsbasis},
\begin{equation}
  \tikzfig{figs/QMSDP}
\end{equation}
The above diagrams provide an intuition for capacity activation: the crossing over of the depolarising channels in the last two terms of the sum  allows information to flow from the input (at the bottom of the picture)  to the output (at the top of the picture). 
\section{Alternative Intuition for Capacity Activation}

From the diagrammatic presentation of capacity activation, it is clear that information reaches the output by using environment systems to shuttle between the \textit{bra} and \textit{ket} parts of a density matrix. This intuition may be expanded to give a step-by-step description of the mechanism of capacity enhancement. For each channel $f$ with output $O_{f}$ and environment $E_{f}$, rather than working with the CPM representation, we instead keep track of the environment system $E_{f}$ by working with the Stinespring dilation.  
\begin{equation}
    \tikzfig{figs/fcpm}
\end{equation}
The depolarising channel can be dilated to an isometry that sends the input state into an environment and entangles the output with an independent environment,
\begin{equation}
    \tikzfig{figs/puredpequation}
\end{equation}
where the upwards pointing cup represents a Bell state. In the following we will use this diagram to visualise the information flow in the switch of completely depolarising channels. 

\subsection{The Switch of Depolarising Channels as Superposition of the Whereabouts of the Input}

Denoting the environments of $f$ and $g$ as $E_{f}$ and $E_{g}$, and their outputs as $O_{f}$,$O_{g}$, we consider the interference term of the quantum switch of the dilations of $f$ and $g$
\begin{equation}
    \tikzfig{figs/envirointer}
\end{equation}



For $f$ and $g$ dilations of completely depolarising channels this gives
\begin{equation}\label{eq15}
    \tikzfig{figs/envirointer4}
\end{equation}
Now, when a state is inserted into the input of the channel, its whereabouts is determined by  the control qubit: if the control state is $\ket{0}$, then the state $\ket{\psi}$ is in $E_{f1}$; if the control state is $\ket{1}$, then the state $\ket{\psi}$ is in $E_{g1}$.

If we now trace all the environment systems, we obtain the following diagram
\begin{equation}\label{teleportation}
    \tikzfig{figs/envirotrace}
\end{equation}
which shows the flow of information from the input to the output arising from the interference of the diagrams corresponding to states $|0\rangle$ and $|1\rangle$.   The information flow is implemented by teleportation, which makes the state $|\psi\rangle$ ricochet on the environment and reappear on the output.  
 


\section{N-Channel Switches}
A natural generalisation of the quantum switch is a quantum supermap that adds quantum  control to some choice of sequential orders of $3$ or more channels \cite{colnaghi2012quantum, Facchini2015QuantumProblem}. We could imagine for example using states $\ket{0},\ket{1},\ket{2}$ of a control qutrit to implement the sequential compositions ($\pazocal{N}^{(1)} \circ \pazocal{N}^{(2)} \circ \pazocal{N}^{(3)} $), ($\pazocal{N}^{(3)} \circ \pazocal{N}^{(1)} \circ \pazocal{N}^{(2)} $), and ($\pazocal{N}^{(2)} \circ \pazocal{N}^{(3)} \circ \pazocal{N}^{(1)} $) respectively. Again one would expect to see interference between the two choices of sequential order in the left and right hand sides of an interference term of the control, 
\begin{figure}[H]
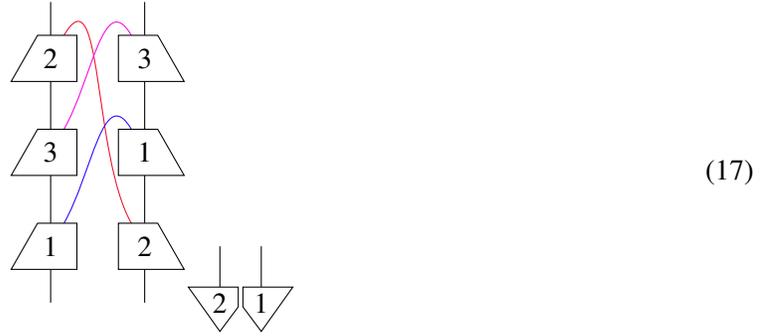

    \centering
    \begin{equation}
    \tikzfig{figs/intriswitch}
    \end{equation}
    \caption{CPM-like diagram for the term $\ket{2}\bra{1}$ in the control of a switch which in state $\ket{2}$ implements $\pazocal{N}^{(2)} \circ \pazocal{N}^{(3)} \circ \pazocal{N}^{(1)} $, and in state $\ket{1}$ implements $\pazocal{N}^{(3)} \circ \pazocal{N}^{(1)} \circ \pazocal{N}^{(2)} $. Each colored wire represents a sum over Kraus operators.}
    \label{fig:intriswitch}
\end{figure}
We refer to such a generalisation as a $3$-channel switch. The advantages for classical capacity enhancement of $3$-channel switches were explored in \cite{Procopio2020SendingOrders} where it was observed that the maximum possible Holevo information \cite{Holevo1998TheStatesb} achievable with a superposition of $3$ sequential orders of $3$ CDPCs exceeds the maximal Holevo information achievable with $4$ or $5$ sequential orders of $3$ CDPCs. The Holevo information for $3$ orders of $3$ CDPS's turns out to be maximised when the orders chosen are cyclic permutations  \cite{Procopio2020SendingOrders}. In the following, we will provide a heuristics suggesting that the optimality of cyclic permutation may also hold for  $N\ge 3$ channels.
\subsection{Diagrammatic Presentation of an N-Party Switch}
We call a supermap which generalises the quantum switch to coherent control of $M$ sequential compositions of $N$ channels $\{ \pazocal{N}^{(i)} \}_{i = 1}^{N}$ each with Kraus operators $\{ K_{j_i}^{(i)} \}_{j_i = 1}^{d^2}$ an $N$-channel switch. Labelling $M$ basis states $\{\ket{\pi}\}$ of a quMit control system by the permutations imposed on the sequential composition of the channels $\pazocal{N}^{(\pi(N))} \circ \dots \circ \pazocal{N}^{(\pi(1))}$, then for a Fourier state control $\ket{+}_{M} \equiv \frac{1}{M}\sum_{\pi} \ket{\pi}$ the supermap would implement an equal weighted superposition of each of the $M$ sequential compositions.
\begin{equation}
\begin{split}
\rho' & = S_{M}(\{ \pazocal{N}^{(i)} \}_{i = 1}^{M})(\rho,\ket{+}\bra{+}) \\
& = \frac{1}{M}\sum_{\pi \pi'}  \sum_{j_1}^{d^2}\dots \sum_{j_N}^{d^2} K_{j_{\pi(1)}}^{(\pi(1))} \dots K_{j_{\pi(N)}}^{{(\pi(N))}}\rho K_{j_{\pi'(N)}}^{{(\pi'(N))}^{\dagger}} \dots K_{j_{\pi'(1)}}^{{(\pi'(1))}^{\dagger}} \otimes \ket{\pi}\bra{\pi'} \\
& \equiv  \frac{1}{M}\sum_{\pi \pi'}  \pazocal{N}_{\pi \pi'} \otimes \ket{\pi}\bra{\pi'} 
\end{split}
\end{equation}
As shown in figure \ref{fig:interm}, diagrammatically the term $ \pazocal{N}_{\pi \pi'} \otimes  \ket{\pi}\bra{\pi'} $ of the switch is a CPM-like diagram with boxes rearranged according to permutations $\pi$ and $\pi^{'}$ on the left and right hand wires respectively. Each wire in a CPM-like diagram represents a sum over Kraus operators.
\begin{figure}[H]
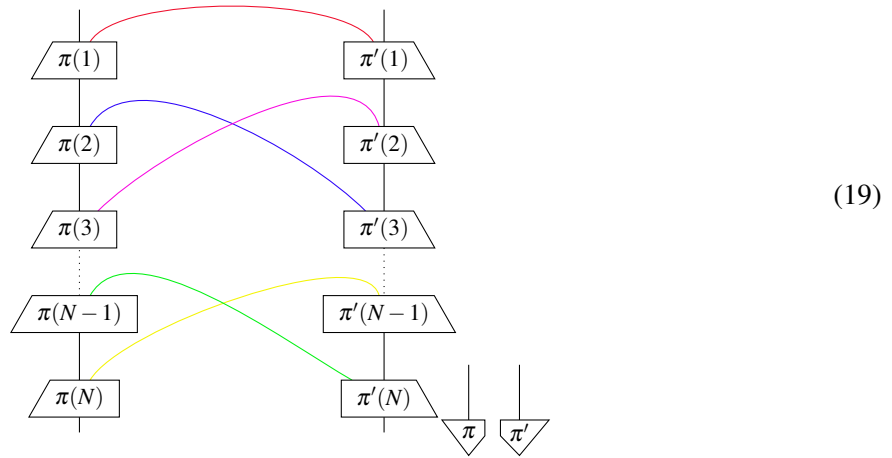

    \centering
    \begin{equation}
    \tikzfig{figs/interm}
\end{equation}
    \caption{CPM-like diagram for the term $ \pazocal{N}_{\pi \pi'} \otimes \ket{\pi}\bra{\pi'}$. Each cap algebraically is a sum over Kraus operators for a particular channel, as such the above diagram corresponds to the case in which $\pi(1) = \pi'(1)$,  $\pi(2) = \pi'(3)$, $\pi(3) = \pi'(2)$, $\pi(N-1) = \pi'(N)$, $\pi(N) = \pi'(N-1)$}
    \label{fig:interm}
\end{figure}
\subsection{Capacity Enhancement By Superposition of Cyclic Permutations}
For any term $\pazocal{N}_{\pi \pi'}$ with $\pi$ and $\pi'$ cyclic permutations $\pi$,$\pi'$ are mutually cyclic in the sense that $\pi' \circ \pi^{-1}$ is a cyclic permutation. Taking each channel $\pazocal{N}^{(i)}$ to be a completely depolarising channel, it is quick to see by hand in the $3$-channel case that any such term $\pazocal{N}_{\pi \pi'}$ is proportional to an identity channel. For example, for $\pazocal{N}_{21}$
\begin{equation}
    \tikzfig{figs/intriswitch2}
\end{equation}
The normalisation of this information transmitting term is increased by the presence of the closed loop (the parallelogram in the right-hand-side of the first equality),   which contributes a factor of $d$ to the diagram.
This result immediately generalises, to demonstrate this we first adopt a cleaner notation to cope with the increasing number of boxes
\begin{equation}
    \tikzfig{figs/notationchange}
\end{equation}
Then figure \ref{fig:cpi} presents a generic diagram for a cyclic permutation between left and right hand wires.
\begin{figure}[H]
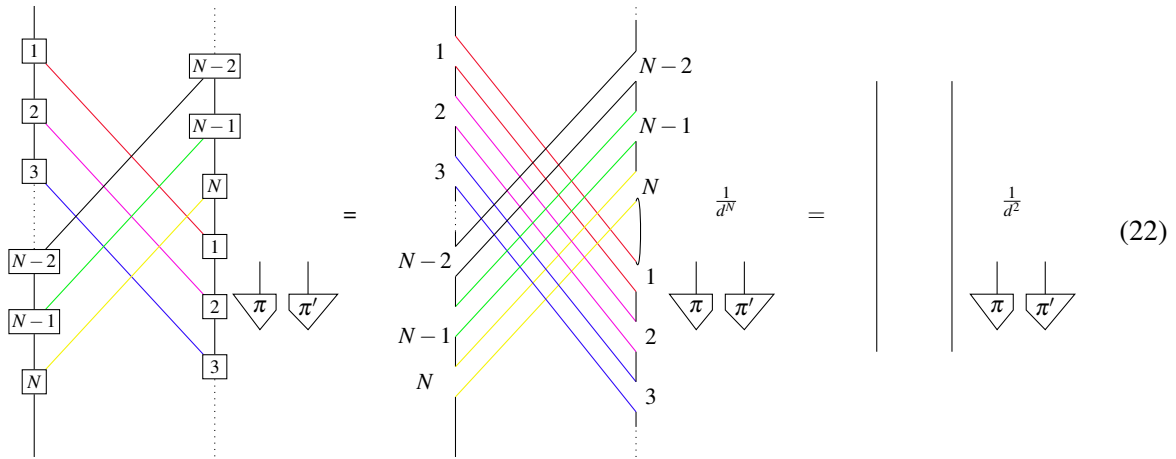

    \centering
    \begin{equation}
        \tikzfig{figs/cycleswitch5a}
    \end{equation}
    \caption{CPM-like Interference diagram for a cyclic permutation between left and right wires is $\frac{1}{d^N}\pazocal{I}$ multiplied by $N-2$ closed loops, giving $\pazocal{N}_{\pi \pi'} = \frac{1}{d^2} \pazocal{I}$}
    \label{fig:cpi}
\end{figure}
Each interference term between cyclic permutations is an information transmitting term with $N-2$ closed loops (the parallelograms in the right-hand-side of the first equality), ensuring that the prefactor of such a term is always $\frac{1}{d^2}$. Since $\pazocal{N}_{\pi \pi'}(\rho)$ gives $\frac{\rho}{d^2}$ when $\pi \neq \pi'$ and $\frac{I}{d}$ when $\pi = \pi'$ the superposition of the $N$ cyclic permutations of $N$ channels has a simple algebraic expression
\begin{equation}
    S_{NC}(\{ \pazocal{N}^{D(i)} \}_{i = 1}^{i = N})(\rho ,\ket{+}\bra{+}) = \sum_{\pi}  \frac{I}{Nd}\trace (\rho) \otimes \ket{\pi}\bra{\pi}  +
    \sum_{\pi \neq \pi'} \frac{\rho}{Nd^2} \otimes \ket{\pi}\bra{\pi'}
\end{equation}
For a general choice of $M$ permutations $\{\pi\}$, it is not true that for all $\pi,\pi'$ $\pazocal{N}_{\pi \pi'}(\rho) \propto  \rho$, and even when there exist $\pi,\pi'$ with $\pazocal{N}_{\pi \pi'}(\rho) \propto \rho$ it is often true that $\trace[\pazocal{N}_{\pi \pi'}] < \frac{1}{d^2}$. 

Guided by an intuition that the capacity enhancement will be greatest when the number and normalisation of information transmitting terms $\pazocal{N}_{\pi \pi'}$ is maximised, we expect the output channel for a general superposition of permutations would then have lower capacity than the case for which all $\pi,\pi'$ are mutually cyclic permutations, this intuition correctly predicts the highest capacity superpositions of $N=3$ CDPCs as explored in \cite{Procopio2020SendingOrders}. Letting the number of terms $\pazocal{N}_{\pi \pi'}$ proportional to the identity channel and the completely depolarising channel be $n_{Id}$ and $n_{Dp}$ respectively, for maximal capacity enhancement we suggest choosing the $M$ orders which optimize 
\begin{equation}
    \mathcal{O}(S) \equiv \frac{n_{id}E_{Id}}{n_{Dp}E_{Dp}}
\end{equation}
where $E_{Id}$, $E_{Dp}$ are the expected value of the normalisation $E(\trace[\pazocal{N}_{\pi \pi'}])$ for identity and depolarising terms.
In section 5.3, we give a general expression for any interference term $\pazocal{N}_{\pi \pi'}$, proving that cyclic permutations uniquely maximise $\mathcal{O}(S)$ for fixed $M \leq N$. This suggests that good candidates for high capacity activation of $N$ CDPCs given a quMit control should consist of superpositions of mutually cyclic permutations.

\subsection{Characterisation by Permutation Properties}


Generalising beyond the cyclic case we derive a simple condition on the permutations $\pi$ and $\pi'$ which can be used to completely determine any $\pazocal{N}_{\pi \pi'}$. Firstly ${\pi}$ and ${\pi'}$ can be used to define cycle permutations $C_\pi$ and $C_{\pi'}$ by 
\begin{equation}
    C_{\pi} \equiv  (0\pi(N)\pi(N-1)\dots \pi(1))
\end{equation}
\begin{equation}
    C_{\pi'} \equiv (0\pi'(1) \dots \pi'(N-1)\pi'(N))
\end{equation}
We will show that the interference diagram for $\pazocal{N}_{\pi \pi'}$ can be used to compute the product
\begin{equation}
    C_{\pi \pi'} \equiv C_{\pi'}^{-1} \circ C_{\pi} =  (0\pi'(N)\pi'(N-1)\dots \pi'(1))(0\pi(1) \dots \pi(N-1)\pi(N))
\end{equation}
and crucially we show the converse, that any term $\pazocal{N}_{\pi \pi'}$ can be computed by finding the cycle decomposition of $C_{\pi \pi'}$. As a corollary we will have demonstrated that the $\pazocal{N}_{\pi \pi'}$ are characterised by a cds sortability \cite{Adamyk2017SORTINGCYCLES} condition between $\pi$ and $\pi^{'}$. We write $c_{\pi \pi'}$ for the number of cycles in the cycle decomposition of $C_{\pi \pi'}$, $\pazocal{D}$ for the completely depolarising channel, and $\pazocal{I}$ for the identity channel. 
\begin{thm}[Information Transmission by Cycle Decomposition]
For the term $\pazocal{N}_{\pi \pi'}$ in the quantum switch of $M$ orders of $N$ CDPCs, 

\begin{itemize}
    \item $\pazocal{N}_{\pi \pi'} \propto $ $d\pazocal{D}$  $\Longleftrightarrow$
    $0$,$\pi(N)$ are not in the same cycle of $C_{\pi \pi'}$ 
    \item $\pazocal{N}_{\pi \pi'} \propto \pazocal{I}$ 
    $\Longleftrightarrow$ $0$,$\pi(N)$ are in the same cycle of $C_{\pi \pi'}$
\end{itemize}
\end{thm}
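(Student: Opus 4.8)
The plan is to unfold the interference diagram of Figure~\ref{fig:interm}, specialised to completely depolarising channels, into a diagram built purely from caps, cups and closed loops, and then to read off the two cases from its connectivity, which I will identify with the cycle structure of $C_{\pi\pi'}$. First I would resolve each Kraus-operator cap. The defining feature of a CDPC is that summing over its Kraus index factorises: the joined ket/bra legs of a single channel $\pazocal{N}^{(i)}$ split into a cap linking its two input wires together with a cup linking its two output wires, weighted by a scalar $1/d$. This is exactly the ``vertical separation'' already recorded for $\pazocal{D}$, now applied inside the interference term. After performing this replacement for all $N$ channels, the diagram for $\pazocal{N}_{\pi\pi'}$ contains no boxes: it is an arrangement of turnbacks in which each channel $i$ contributes one turnback between its two input legs and one between its two output legs, while the ket stack wires each channel output to the next channel input in the order dictated by $\pi$ and the bra stack does the same in the order dictated by $\pi'$.

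Next I would track connectivity. Every internal leg now has degree two (one turnback and one stack wire), so the diagram decomposes into closed loops together with open strands terminating at the four external legs: input-ket, input-bra, output-ket and output-bra. I would introduce the boundary symbol $0$ to represent these terminations and verify that walking along the strands --- alternating a turnback (which swaps ket and bra at a fixed channel) with a stack wire (which advances one position up the ket side, recorded by $C_\pi$, or up the bra side, recorded by $C_{\pi'}$) --- realises precisely the composite cycle permutation $C_{\pi\pi'}$ on $\{0,1,\dots,N\}$. Under this dictionary each closed loop of the diagram is a cycle of $C_{\pi\pi'}$ that avoids $0$ and contributes a scalar factor of $d$, while the cycle through $0$ records how the four external legs are wired to one another.

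The decisive step is then the dichotomy. Because the turnbacks respect the ket/bra bipartition, the external legs can only be paired as input-ket to output-ket with input-bra to output-bra (a straight-through wiring, giving $\pazocal{I}$), or input-ket to input-bra with output-ket to output-bra (a cap on the input and a cup on the output, giving $\pazocal{D}$); the ``crossed'' pairing is excluded, since it would make $\pazocal{N}_{\pi\pi'}$ a transpose and hence not completely positive. Since $\pi(N)$ is the channel whose input leg is glued to input-ket, the straight-through wiring occurs exactly when the strand leaving $0$ through $\pi(N)$ returns to $0$ on the output side, that is, exactly when $0$ and $\pi(N)$ lie in the same cycle of $C_{\pi\pi'}$; otherwise that strand reaches input-bra and the depolarising wiring occurs. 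Tallying the factor $d$ from each loop against the $1/d$ from each of the $N$ channels then fixes the prefactor and, via the change in the number of boundary-avoiding cycles between the two cases, produces the extra factor $d$ in the $d\pazocal{D}$ case, completing both equivalences.

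I expect the main obstacle to be the bookkeeping in the middle step: showing rigorously, and not merely on small examples, that the strand-walk through the turnback-plus-stack structure is faithfully encoded by the composite of $C_\pi$ and $C_{\pi'}$, and above all that the single boundary element $0$ correctly glues the input and output terminations so that ``$0$ and $\pi(N)$ co-cyclic'' is genuinely equivalent to the straight-through wiring. Handling $0$ is delicate because it simultaneously stands for all four external terminations, so I would establish the correspondence by an explicit induction along the walk rather than by inspection, and only then invoke complete positivity to discard the crossed case.
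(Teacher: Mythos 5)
Your overall route is the same as the paper's: substitute the cap-and-cup form of each CDPC into the CPM-like diagram for $\pazocal{N}_{\pi \pi'}$, follow the resulting strands, identify the strand-walk (turnback alternating with a stack wire) with the permutation $C_{\pi \pi'}$ acting on $\{0,1,\dots,N\}$ with $0$ standing for the boundary, and read off the identity/depolarising dichotomy from whether $0$ and $\pi(N)$ share a cycle. The bookkeeping you flag as the main obstacle is precisely what the paper's five-step walk and its boundary-modified diagram carry out, so there is no divergence of method there.

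The one step that would fail as written is your exclusion of the crossed pairing (input-ket to output-bra) by complete positivity. For $\pi\neq\pi'$ the term $\pazocal{N}_{\pi \pi'}$ is an off-diagonal block of the switch output, not a channel: a sum $\sum_j K_j\,\rho\, L_j^{\dagger}$ built from two different Kraus families is in general not completely positive and not even Hermiticity-preserving, and positivity of the full block matrix $\sum_{\pi\pi'}\pazocal{N}_{\pi\pi'}\otimes\ket{\pi}\bra{\pi'}$ does not force each off-diagonal block to be CP. So ``a transpose would violate complete positivity'' is not an available argument, and without it your dichotomy is not exhaustive. Fortunately the crossed pairing is excluded for a purely structural reason that fits your own setup: orient the strands. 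Every cap joins the two \emph{input} legs of a given channel and every cup joins its two \emph{output} legs, so a strand travelling upward on the ket column is sent by a cap to travel downward on the bra column, and is returned by a cup to travel upward on the ket column again. A strand entering at the input-ket leg therefore only ever visits legs while moving upward on the ket side or downward on the bra side, and can terminate only at output-ket or input-bra, never at output-bra. Substituting this orientation argument for the appeal to complete positivity closes the gap and the rest of your proof goes through.
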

\begin{proof}

The permutation
\begin{equation}
    C_{\pi \pi'} \equiv C_{\pi'}^{-1} \circ C_{\pi} =  (0\pi'(N)\pi'(N-1)\dots \pi'(1))(0\pi(1) \dots \pi(N-1)\pi(N))
\end{equation}
Is the function $C_{\pi \pi'}(\pi(a)) = \pi^{'}({\pi'}^{-1}(\pi(a+1)) - 1)$. 
Using the CPM-like diagram for $\pazocal{N}_{\pi \pi'}$ in figure \ref{cpi1}, the following steps compute $C_{\pi \pi'}(\pi(a))$ by following a connected path along the diagram from label $\pi(a)$ at position $a$. 
\begin{figure}[H]
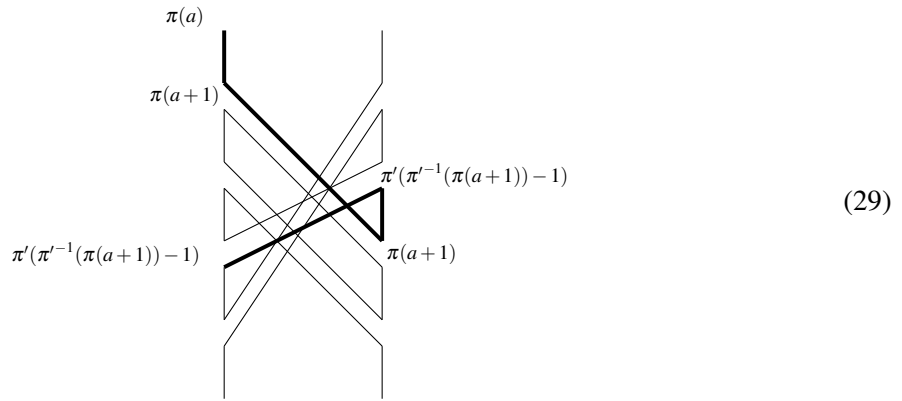

    \centering
    \begin{equation}
        \tikzfig{figs/cpiexpdub}
    \end{equation}
    \caption{CPM-like diagram used to implement $C_{\pi \pi'}$. Being located $a$ slots from the top of the diagram, label $\pi_{a}$ is at position $a$ on the LHS}
    \label{cpi1}
\end{figure}
\begin{itemize}
    \item Start with label $\pi(a)$ at position $a$ on the LHS
    \item Move downwards to label $\pi(a+1)$ at position $a+1$
    \item Use wire to move to the same label $\pi(a+1)$ on the RHS, this will be at position ${\pi'}^{-1}(\pi(a+1))$
    \item Move up to the label $\pi'(\pi^{-1}(\pi(a+1)) - 1)$ at position ${\pi'}^{-1}(\pi(a+1))-1$  on the RHS 
    \item Use wire to move to move to same label $\pi'(\pi^{-1}(\pi(a+1)) - 1)$ on the LHS 
\end{itemize}
These steps implement $C_{\pi \pi'}(\pi(a))$ except for when the steps require a path which is undefined due to the open ends of the CPM-like digram, I.E when $\pi(a+1) = 0$ or $a=N$. The modification in figure \ref{cpi2} of the CPM-like diagram accounts for these edge cases and so can be used to compute $C_{\pi \pi'}(\pi(a))$ for any $a$.
\begin{figure}
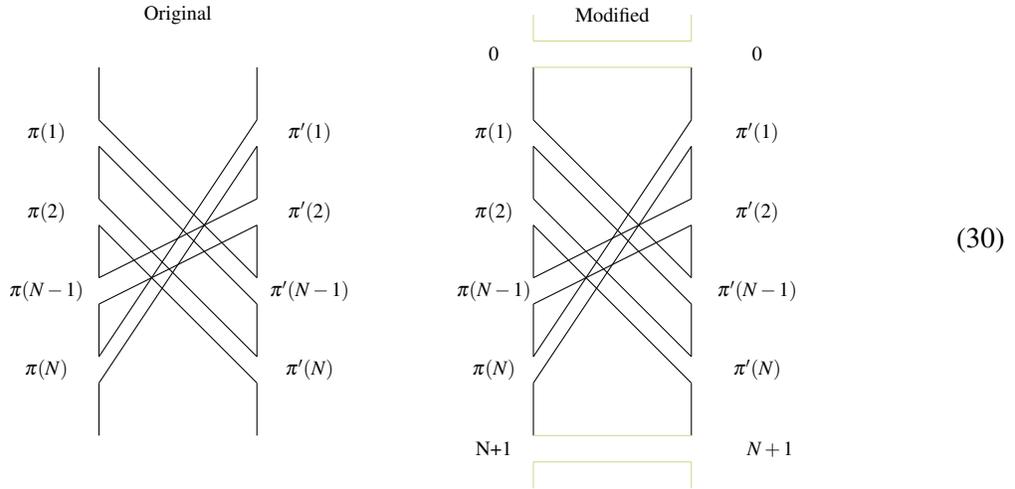

    \centering
    \begin{equation}
        \tikzfig{figs/cpiexp2dub}
    \end{equation}
    \caption{Modification of CPM-like diagram so that it may be used to fully evaluate $C_{\pi \pi'}$}
    \label{cpi2}
\end{figure}
If by starting at label $0$ on the LHS, iterating the above steps reaches node $\pi(N)$ on the LHS, the diagram is connected from top left to the bottom left, the same will be true for the RHS since the unmodified diagram can have no other open ends, and the diagram will be proportional to the identity channel. Iteration of the above steps is repeated application of $C_{\pi \pi'}$, it follows that if in the cycle decomposition of $C_{\pi \pi'}$, $0$ and $\pi(N)$ are in the same cycle, then $\pazocal{N}_{\pi \pi'}$ is proportional to the identity channel. Alternatively if $0$ and $\pi(N)$ are not in the same cycle the channel is proportional to the completely depolarising channel.
\end{proof}
Furthermore the cycle decomposition of $C_{\pi \pi'}$ completely determines the normalisation of each $\pazocal{N}_{\pi \pi'}$.

\begin{thm}[Normalisation by Cycle Decomposition]
The normalisation of $\pazocal{N}_{\pi \pi'}$ is determined by the number of cycles in the cycle decomposition of $C_{\pi \pi'}$
\begin{itemize}
    \item $\pazocal{N}_{\pi \pi'} = \frac{1}{d^N}d^{c_{\pi \pi'} - 2}d \pazocal{D}$ when $0$ and $\pi(N)$ are not in the same cycle of $C_{\pi \pi'}$
    \item $\pazocal{N}_{\pi \pi'} = \frac{1}{d^N}d^{c_{\pi \pi'} - 1} \pazocal{I}$ when when $0$ and $\pi(N)$ are in the same cycle of $C_{\pi \pi'}$
\end{itemize}
\end{thm}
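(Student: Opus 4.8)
The plan is to take the path-tracing analysis already set up in the proof of the preceding theorem and upgrade it from a binary (connected/disconnected) statement into a full count of closed loops. The central fact I would make precise is that the cycle decomposition of $C_{\pi \pi'}$ is in bijection with the connected components of the boundary-closed CPM-like diagram of figure \ref{cpi2}: iterating $C_{\pi \pi'}$ from any label traverses exactly one connected wire until it returns to its starting label, so each cycle of $C_{\pi \pi'}$ is realised as precisely one connected component, giving $c_{\pi \pi'}$ components in total. The element $0$ plays the role of the external boundary node introduced by the modification of figure \ref{cpi2}, so that a component reaches the genuine open ends of the diagram exactly when its cycle contains $0$ or $\pi(N)$.

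Next I would attach a scalar weight to each component using the diagrammatic identities of the preliminaries. Each of the $N$ completely depolarising boxes contributes a factor $\frac{1}{d}$, giving the common prefactor $\frac{1}{d^N}$, while every closed loop (a component disjoint from the boundary) contributes a factor $d$ by the ``closed loop equals dimension'' rule. It then remains only to count how many of the $c_{\pi \pi'}$ components are closed loops, the rest constituting the surviving open structure that the previous theorem already identified as $\pazocal{I}$ or $\pazocal{D}$.

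Here the two cases separate cleanly. When $0$ and $\pi(N)$ lie in the same cycle of $C_{\pi \pi'}$, that single cycle carries the whole boundary-touching through-structure, which evaluates to $\pazocal{I}$ with unit coefficient; the remaining $c_{\pi \pi'}-1$ cycles are closed loops, and multiplying weights gives $\pazocal{N}_{\pi \pi'} = \frac{1}{d^N} d^{\,c_{\pi \pi'}-1}\pazocal{I}$. When $0$ and $\pi(N)$ lie in distinct cycles, the boundary structure splits into an input cap (the component through $0$) and an output cup (the component through $\pi(N)$); these are two components, so only $c_{\pi \pi'}-2$ cycles survive as closed loops, while the bare cap-cup combination is exactly $d\pazocal{D}$ (as $\pazocal{D} = \frac{1}{d}$ times a cap-cup). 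Multiplying then yields $\pazocal{N}_{\pi \pi'} = \frac{1}{d^N} d^{\,c_{\pi \pi'}-2}\, d\,\pazocal{D}$. I would close by checking the extreme cases: the diagonal term $\pi = \pi'$ has $C_{\pi \pi'} = \mathrm{id}$ with $c_{\pi \pi'} = N+1$, recovering $\pazocal{N}_{\pi\pi} = \pazocal{D}$, and a mutually cyclic pair has $c_{\pi \pi'} = N-1$, recovering the $\frac{1}{d^2}\pazocal{I}$ of figure \ref{fig:cpi}.

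The step I expect to be the main obstacle is making the cycles-to-components bijection fully rigorous at the boundary: namely justifying that the single distinguished element $0$ correctly glues the two through-wires into one component in the identity case (so exactly one cycle is subtracted), yet sits in the cap while $\pi(N)$ sits in the cup in the depolarising case (so two cycles are subtracted). This needs a careful reading of the edge-case modifications of figure \ref{cpi2} and a check that no spurious open ends are introduced, together with the bookkeeping confirming that precisely one of the two factors of $d$ in the depolarising count is absorbed into the normalisation of $d\pazocal{D}$ rather than coming from a genuine loop.
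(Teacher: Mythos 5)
Your proposal is correct and follows essentially the same route as the paper's Appendix A proof: identify $c_{\pi\pi'}$ with the number of closed loops in the boundary-modified diagram, observe that the modification creates two extra loops in the depolarising case and one in the identity case, and weight the result by $\frac{1}{d}$ per CDPC and $d$ per genuine closed loop. Your additional care over the cycles-to-components bijection and the consistency checks at $\pi=\pi'$ and at mutually cyclic pairs go slightly beyond what the paper writes down, but they do not change the argument.
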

\begin{proof}
    Given in Appendix A. 
\end{proof}
In \cite{Adamyk2017SORTINGCYCLES} it is proved that $\pi$ and $\pi^{'}$ are cds sortable iff $0$ and $\pi^{'}(N)$ are in the same cycle of $C_{\pi \pi'}$. 
\begin{coro}[Information Transmission by CDS Sortability] The term $\pazocal{N}_{\pi \pi'}$ is
\begin{itemize}
    \item Proportional to the completely depolarising channel if $\pi$ and $\pi^{'}$ are cds sortable
    \item Proportional to the Identity channel if $\pi$ and $\pi^{'}$ are not cds sortable
\end{itemize}
\end{coro}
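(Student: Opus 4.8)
The plan is to read the corollary off directly from the preceding theorem (Information Transmission by Cycle Decomposition) together with the quoted characterisation of \cite{Adamyk2017SORTINGCYCLES}; essentially no new diagrammatic work is required, only a translation between two cycle-membership conditions. First I would restate the dichotomy already proved: the type of $\pazocal{N}_{\pi\pi'}$ is fixed by whether $0$ and $\pi(N)$ lie in a common cycle of $C_{\pi\pi'}$, a common cycle forcing proportionality to $\pazocal{I}$ and separate cycles forcing proportionality to $\pazocal{D}$. The goal is then to replace the condition on $0$ and $\pi(N)$ by the cds-sortability condition, which \cite{Adamyk2017SORTINGCYCLES} phrases in terms of $0$ and $\pi'(N)$.

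The one step carrying content is the bridge between the two endpoint conditions, and I would obtain it from the connectivity observation already used in the proof of the cycle-decomposition theorem. There the modified CPM-like diagram has exactly two pairs of open ends, the input/output pair on the left-hand (ket) wires and the input/output pair on the right-hand (bra) wires, so the path starting at the top-left label $0$ either terminates at the bottom-left label $\pi(N)$ or is forced across to the right-hand ends. Consequently the left-hand ends are joined precisely when the right-hand ends are joined, which is exactly the statement that $0$ and $\pi(N)$ share a cycle of $C_{\pi\pi'}$ if and only if $0$ and $\pi'(N)$ do. Feeding this equivalence into the dichotomy, and then into the \cite{Adamyk2017SORTINGCYCLES} criterion, produces a chain of biconditionals linking the channel type of $\pazocal{N}_{\pi\pi'}$ to cds-sortability of the pair $(\pi,\pi')$.

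The step I expect to be delicate is fixing the orientation of this chain, that is, deciding which sortability class maps to $\pazocal{D}$ and which to $\pazocal{I}$, since an off-by-a-complement slip here would merely swap the two cases of the corollary. I would pin the orientation down against the original two-channel switch, where the diagonal terms $\pi=\pi'$ give the completely depolarising output $I/d$ and the off-diagonal (mutually cyclic) terms give the identity output $\rho/d^2$; matching these outputs against the cds-sortability status of the corresponding pairs fixes the correspondence unambiguously, after which the general statement follows by assembling the biconditionals of the previous paragraph. Beyond this bookkeeping the argument is a pure substitution, so I would keep the write-up short and lean entirely on the two results already in hand.
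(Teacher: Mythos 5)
Your route is the same as the paper's: the corollary is obtained by composing Theorem~1 with the quoted sortability criterion from \cite{Adamyk2017SORTINGCYCLES}, and the paper offers no further argument. The one piece of genuine content you add --- the bridge showing that $0$ and $\pi(N)$ lie in a common cycle of $C_{\pi \pi'}$ exactly when $0$ and $\pi'(N)$ do --- is correct and is in fact needed, since Theorem~1 is phrased in terms of $\pi(N)$ while the quoted criterion is phrased in terms of $\pi'(N)$; the paper silently elides this mismatch. Your connectivity argument for the bridge (four open ends, the left pair joined internally iff the right pair is) is sound, and it checks out on small cases: for $N=2$ and $\pi=\pi'=\mathrm{id}$ one computes $C_{\pi\pi'}=\mathrm{id}$, so $0$ and $\pi(N)$, and likewise $0$ and $\pi'(N)$, lie in distinct cycles, matching the depolarising output $I/d$ of the diagonal term.

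The caveat is that the ``orientation'' step you flag as delicate is more than bookkeeping, and you should not expect it to close smoothly. Chaining the criterion exactly as quoted in the text (``cds sortable iff $0$ and $\pi'(N)$ are in the same cycle of $C_{\pi\pi'}$'') through your correct bridge and Theorem~1 gives: cds sortable $\Leftrightarrow$ $0$ and $\pi(N)$ share a cycle $\Leftrightarrow$ $\pazocal{N}_{\pi\pi'}\propto\pazocal{I}$ --- the \emph{opposite} of the corollary as stated. Your proposed calibration against the two-channel switch detects precisely this tension: $\pi=\pi'$ must surely count as cds sortable (no moves are required), yet it yields the depolarising channel and fails the quoted ``same cycle'' test. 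So one of the quoted criterion or the corollary carries a spurious negation, and the calibration shows it is the former: the sortability condition should read ``\emph{not} in the same cycle'' (equivalently, an empty strategic pile in the language of the cited reference), after which your chain of biconditionals lands on the corollary exactly as stated. Your plan therefore works, but only if at the calibration step you trust the $N=2$ computation and correct the in-text statement of the criterion rather than flipping the two bullets of the corollary.
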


It is also shown in \cite{Adamyk2017SORTINGCYCLES} that for any $\pi,\pi'$ with $\pi$ not cds sortable to $\pi^{'}$, $C_{\pi \pi'}$ has maximal number of cycles in its cycle decomposition if and only if $\pi^{'}\pi^{-1}$ is a cyclic permutation.

\begin{coro}[Optimising $\mathcal{O}(S)$]
The Cyclic Permutation Protocol Optimises $\mathcal{O}(S)$ for $M \leq N$
\end{coro}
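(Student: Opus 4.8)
The plan is to turn the objective $\mathcal{O}(S)$ into a purely combinatorial quantity governed by the cycle numbers $c_{\pi\pi'}$, and then optimise that quantity. First I would apply the Normalisation by Cycle Decomposition theorem: in both of its cases the scalar attached to $\pazocal{N}_{\pi\pi'}$ is the same, since $\frac{1}{d^N}d^{c_{\pi\pi'}-2}d$ and $\frac{1}{d^N}d^{c_{\pi\pi'}-1}$ both equal $d^{c_{\pi\pi'}-N-1}$, so $\trace[\pazocal{N}_{\pi\pi'}]=d^{c_{\pi\pi'}-N-1}$ for every term. Because $n_{Id}E_{Id}=\sum_{\text{id terms}}\trace[\pazocal{N}_{\pi\pi'}]$ and $n_{Dp}E_{Dp}=\sum_{\text{dp terms}}\trace[\pazocal{N}_{\pi\pi'}]$, the common factor $d^{-N-1}$ cancels and
\[ \mathcal{O}(S)=\frac{\sum_{\text{id terms}}d^{c_{\pi\pi'}}}{\sum_{\text{dp terms}}d^{c_{\pi\pi'}}}. \]
Thus maximising $\mathcal{O}(S)$ is exactly the problem of pushing cycle-weight into the identity terms and away from the depolarising terms.

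Next I would record the structural bounds on $c_{\pi\pi'}$. Since $C_{\pi\pi'}=C_{\pi'}^{-1}\circ C_{\pi}$ is a product of two $(N+1)$-cycles on $\{0,\dots,N\}$ it is an even permutation, so $c_{\pi\pi'}\equiv N+1 \pmod 2$ and $c_{\pi\pi'}\le N+1$, with $c_{\pi\pi'}=N+1$ occurring exactly when $C_{\pi\pi'}=\mathrm{id}$, i.e. when $\pi=\pi'$ (the map $\pi\mapsto C_{\pi}$ is injective). The $M$ diagonal pairs are therefore depolarising terms carrying the maximal weight $d^{N+1}$, while every off-diagonal pair has $c_{\pi\pi'}\le N-1$, the value $N$ being excluded by parity. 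Finally, invoking the cited result of Adamyk, among the off-diagonal identity (non-cds-sortable) pairs the maximal value $c_{\pi\pi'}=N-1$ is attained precisely when $\pi'\pi^{-1}$ is cyclic, that is, for mutually cyclic pairs.

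With these facts the upper bound is immediate. For any choice of $M$ orders I split the $M^2$ pairs into the $M$ diagonal pairs and the $M(M-1)$ off-diagonal pairs. The denominator contains at least the diagonal contribution $M d^{N+1}$, and each off-diagonal pair contributes a weight $d^{c_{\pi\pi'}}\le d^{N-1}$ either to the numerator (if identity) or to the denominator (if depolarising). Converting an off-diagonal depolarising term into an identity term simultaneously removes positive weight from the denominator and adds weight to the numerator, and raising its cycle number to $N-1$ maximises that added weight; both effects strictly increase the ratio. Hence
\[ \mathcal{O}(S)\le \frac{M(M-1)d^{N-1}}{M d^{N+1}}=\frac{M-1}{d^2}, \]
with equality if and only if every off-diagonal pair is a mutually cyclic identity term, which is precisely the cyclic permutation protocol (as confirmed by the $S_{NC}$ expression).

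The main obstacle is the achievability-and-uniqueness half, and it is where the hypothesis $M\le N$ enters: I must exhibit, for each $M\le N$, a family of $M$ orders all of whose pairwise quotients $\pi_b\pi_a^{-1}$ are cyclic, so that all $M(M-1)$ off-diagonal terms reach $c_{\pi\pi'}=N-1$ at once. I would construct such a family from a single cyclic generating structure so that every pairwise quotient is a single cycle, and then argue, from the even-permutation parity of the $C_{\pi\pi'}$ together with the Adamyk maximal-cycle characterisation, both that no simultaneously-mutually-cyclic family of size exceeding $N$ can saturate the bound and that any bound-achieving configuration consists entirely of mutually cyclic pairs, yielding uniqueness. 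The delicate point is to verify that a \emph{size-}$M$ family genuinely exists for every $M\le N$ — not merely that each pair can individually be made cyclic — since the pairwise constraints interact; once this realisability is established, the corollary follows directly from the displayed inequality.
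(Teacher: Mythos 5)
Your proposal is correct and, at its core, runs along the same lines as the paper's (very terse) proof: both arguments combine Theorem 2 with the cited Adamyk characterisation to conclude that any pair $\pi,\pi'$ whose quotient is not cyclic either inflates the denominator of $\mathcal{O}(S)$ (an extra depolarising term) or contributes strictly less than $\tfrac{1}{d^2}$ to the numerator, so that $\mathcal{O}(S)<\tfrac{M(M-1)/d^2}{M}=\tfrac{M-1}{d^2}$. Where you genuinely add something is in the intermediate bookkeeping: the observation that the scalar is uniformly $d^{c_{\pi\pi'}-N-1}$ reduces the optimisation to a purely combinatorial ratio of $d^{c_{\pi\pi'}}$ weights, and your parity argument (evenness of $C_{\pi'}^{-1}\circ C_\pi$ forces $c_{\pi\pi'}\equiv N+1 \pmod 2$, hence $c_{\pi\pi'}\le N-1$ off the diagonal) gives a self-contained derivation of the bound that the paper obtains only by citation; the explicit identification of the equality case is likewise left implicit in the paper. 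The one step you flag as outstanding --- exhibiting, for each $M\le N$, a family of $M$ orders whose pairwise quotients are \emph{all} cyclic simultaneously --- is not actually supplied by the paper either: it is asserted earlier in the text that distinct cyclic permutations are mutually cyclic, and the proof of the corollary takes this for granted. So your proof is no less complete than the paper's on that point, and you are right that it is the place where the hypothesis $M\le N$ does real work; a careful writeup would verify it for the specific family of rotations (and would need to be careful about the paper's conventions for $C_\pi$ versus $C_{\pi'}$, which are written with opposite orderings, when checking that $c_{\pi\pi'}=N+1$ characterises the diagonal).
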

\begin{proof}
    Given in Appendix B.
\end{proof}

\begin{coro}[Generic Output For the Switch of Depolarising Channels]
The generic output of the quantum switch of $M$ orders of $N$ completely depolarising channels may be written in terms of the set \textbf{cds} of pairs of permutations which are sortable to each-other by cds permutations
\begin{equation}
\begin{split}
\rho'  = \frac{1}{M}\sum_{(\pi,\pi') \in \textbf{cds}}  d^{(c_{\pi \pi'} - 1 - N)} \pazocal{D} \otimes \ket{\pi}\bra{\pi'} + \frac{1}{M}\sum_{(\pi,\pi') \not\in \textbf{cds}}  d^{(c_{\pi \pi'} - 1 - N)} \pazocal{I} \otimes \ket{\pi}\bra{\pi'}
\end{split}
\end{equation}
\end{coro}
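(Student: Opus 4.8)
The plan is to derive the closed form by direct substitution into the general output of the $N$-channel switch, using the two preceding theorems to classify and normalise each interference term. Recall that the switch of $M$ orders of $N$ CDPCs was written as $\rho' = \frac{1}{M}\sum_{\pi \pi'} \pazocal{N}_{\pi \pi'} \otimes \ket{\pi}\bra{\pi'}$, so it suffices to evaluate every $\pazocal{N}_{\pi \pi'}$ and then reorganise the double sum into the two families appearing in the statement.

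First I would partition the set of index pairs $(\pi,\pi')$ according to membership in $\textbf{cds}$. By the Corollary on Information Transmission by CDS Sortability, every pair in $\textbf{cds}$ satisfies $\pazocal{N}_{\pi \pi'} \propto \pazocal{D}$, while every pair outside $\textbf{cds}$ satisfies $\pazocal{N}_{\pi \pi'} \propto \pazocal{I}$. This already sorts the double sum into a depolarising block indexed by $\textbf{cds}$ and an identity block indexed by its complement, with no overlap and no omissions.

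Next I would read off the exact scalar from Theorem 2, matching each block to the corresponding branch. The two branches of Theorem 2 are keyed precisely to the two channel types, so the $\textbf{cds}$ block takes $\pazocal{N}_{\pi \pi'} = \frac{1}{d^N} d^{c_{\pi \pi'} - 2} d\, \pazocal{D}$ and the complementary block takes $\pazocal{N}_{\pi \pi'} = \frac{1}{d^N} d^{c_{\pi \pi'} - 1} \pazocal{I}$. The decisive arithmetic is that both prefactors collapse to the single exponent $d^{c_{\pi \pi'} - 1 - N}$: in the depolarising case the stray factor $d$ multiplying $\pazocal{D}$ promotes $c_{\pi \pi'} - 2$ to $c_{\pi \pi'} - 1$, matching the identity case exactly. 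Substituting these values and pulling out the common $\frac{1}{M}$ then yields precisely the two displayed sums.

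The argument is bookkeeping once Theorems 1 and 2 and the cds characterisation are available, so I do not expect a genuine obstacle. The one point demanding care is the exponent reconciliation together with the correct pairing of $\textbf{cds}$ membership with the matching normalisation branch of Theorem 2; this is where a misplaced factor of $d$ or a swapped branch would be easiest to introduce. I would guard against this by checking the mutually cyclic special case, where $c_{\pi \pi'} = N - 1$ makes $d^{c_{\pi \pi'} - 1 - N}$ reduce to $\frac{1}{d^2}$, recovering the prefactor already computed for the switch of cyclic permutations.
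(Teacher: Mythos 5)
Your proposal is correct and is exactly the argument the paper intends: the corollary is stated without an explicit proof precisely because it is the bookkeeping you describe — substitute the two branches of Theorem 2 into $\rho' = \frac{1}{M}\sum_{\pi\pi'}\pazocal{N}_{\pi\pi'}\otimes\ket{\pi}\bra{\pi'}$, use the cds-sortability corollary to index the two blocks, and observe that both prefactors collapse to $d^{c_{\pi\pi'}-1-N}$. Your sanity checks (the cyclic case giving $1/d^2$, and implicitly the diagonal case $c_{\pi\pi}=N+1$ giving coefficient $1$) confirm the exponent reconciliation.
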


\section{Summary}
By translating an algebraic expression for the quantum switch into a sum of diagrams we have found an alternative intuition for capacity activation in terms of quantum teleportation, and generalised computational results to the $N$ cyclic permutations of $N$ channels. We further derived a condition for separability of a diagram in terms of cds sortability which we in turn used to demonstrate optimality of the cyclic protocol with respect to a simple heuristic for information transmission. 
\section{Acknowledgments}
GC acknowledges a stimulating discussion with  S Popescu and P Skrzypczyk, who originally devised  Eq. (\ref{teleportation}).  We thank J Hefford, H Kristjánsson, and A Vanrietvelde for useful discussions. This work was supported by the National Natural Science Foundation of China through grant 11675136, by the Hong Kong Research Grant Council through grant 17307719 and though the Senior Research Fellowship Scheme SRFS2021-7S02, by the Croucher Foundation, and by the John Templeton Foundation through grant 61466, The Quantum Information Structure of Spacetime (qiss.fr). Research at the Perimeter Institute is supported by the Government of Canada through the Department of Innovation, Science and Economic Development Canada and by the Province of Ontario through the Ministry of Research, Innovation and Science. The opinions expressed in this publication are those of the authors and do not necessarily reflect the views of the John Templeton Foundation. MW acknowledges support by University College London and the EPSRC Doctoral Training Centre for Delivering Quantum Technologies.


\appendix

\section{Proof of Theorem 2}
We show that the normalisation of $\pazocal{N}_{\pi \pi'}$ is determined by the number of cycles in the cycle decomposition of $C_{\pi \pi'}$, specifically
\begin{itemize}
    \item $\pazocal{N}_{\pi \pi'} = \frac{1}{d^N}d^{c_{\pi \pi'} - 2}d \pazocal{D}$ when $0$ and $\pi(N)$ are not in the same cycle of $C_{\pi \pi'}$
    \item $\pazocal{N}_{\pi \pi'} = \frac{1}{d^N}d^{c_{\pi \pi'} - 1} \pazocal{I}$ when when $0$ and $\pi(N)$ are in the same cycle of $C_{\pi \pi'}$
\end{itemize}
\begin{proof}
Each substitution of a CDPC into a CPM-like diagram contributes a factor of $\frac{1}{d}$, substitution of all $N$ CDPCs then contributes $\frac{1}{d^N}$. The proportionality constant between $\pazocal{N}_{\pi \pi'}$ and each of the diagrams in figure $6$ is computed by counting the number of closed loops in the corresponding CPM-like diagram, each of which contributes an additional factor of $d$.
\begin{figure}[H]
    \centering
    \begin{equation}
        \tikzfig{figs/diagrams}
    \end{equation}
    \caption{We find the proportionality constant between $\pazocal{N}_{\pi \pi'}$ and either $\pazocal{I}$ or $d\pazocal{D}$}
    \label{diagrams}
\end{figure}
The number of cycles $c_{\pi \pi'}$ in the cycle decomposition of $C_{\pi \pi'}$ is the number of closed loops in the modified diagram for $\pazocal{N}_{\pi \pi'}$. For $\pazocal{N}_{\pi \pi'} \propto d\pazocal{D}$ the modification of the diagram has introduced $2$ new closed loops, whereas for $\pazocal{N}_{\pi \pi'} \propto \pazocal{I}$ the modification has introduced only $1$ extra closed loop into the diagram. As such the number of closed loops in the unmodified CPM-like diagram for $\pazocal{N}_{\pi \pi'}$ is
\begin{itemize}
    \item $c_{\pi \pi'} -2$ when $\pazocal{N}_{\pi \pi'} \propto d\pazocal{D}$
    \item $c_{\pi \pi'} - 1$ when $\pazocal{N}_{\pi \pi'} \propto \pazocal{I}$
\end{itemize}
and so the proportionality constant for $\pazocal{N}_{\pi \pi'}$ is given by $\frac{1}{d^N}d^{c_{\pi \pi'} - 2}$ for $\pazocal{N}_{\pi \pi'} \propto d\pazocal{D}$ and $\frac{1}{d^N}d^{c_{\pi \pi'} - 1}$ for $\pazocal{N}_{\pi \pi'} \propto \pazocal{I}$.
\end{proof}
\section{Proof of Corollary 4}
We show that the Cyclic Permutation Protocol Optimises $\mathcal{O}(S)$ for $M \leq N$
\begin{proof}
Any protocol with $M$ permutations, for which there exists $\pi,\pi'$ with $\pi' \circ \pi^{-1}$ not a cyclic permutation, will have some term $\ket{\pi}\bra{\pi'}$ with either $\pazocal{N}_{\pi \pi'}(\rho) = \alpha \frac{\rho}{d^2}$ and $\alpha < 1$ or $\pazocal{N}_{\pi \pi'}(\rho) = \beta \frac{I}{d}$. In either case \[\mathcal{O}(S) < \frac{M(M-1)\frac{1}{d^2}}{M} = \mathcal{O}(S_{\textrm{cyclic}})\]
\end{proof}

\end{document}